
\documentclass[preprint,12pt,authoryear]{elsarticle}




\usepackage{amssymb}
\usepackage{amsmath}
\usepackage{mathrsfs}
\usepackage{amsthm}

\newtheorem{thm}{Theorem} 
\newtheorem{lem}[thm]{Lemma} 
\newdefinition{rmk}{Remark} 
\newproof{pf}{Proof} 



\begin{document}

\begin{frontmatter}



\title{Non-existence of a  certain kind of  finite-letter mutual information characterization for a class of time-invariant Markoff channels}


\author{Mukul Agarwal 
}


\begin{abstract}
We provide a rigorous definition of a certain kind of characterization for capacity regions of a family of Markoff networks which is based on optimization problems resulting out of calculating conditional mutual informations from a finite number of random variables and by constraining these random variables in a certain way. 
This definition is partly motivated by the definition of single-letter characterizations in information theory. For a point-to-point Markoff channel, we prove that approximating the solution to these characterizations within an additive constant is a computable problem. Based on previous undecidability results concerning capacities of certain class of finite state machine channels (FSMCs), it will follow  that there exists an example of family of FSMCs for which given such a characterization, this characterization  cannot represent the capacity of this family of FSMCs.

\end{abstract}

\begin{keyword}
finite letter characterization (FLC), computability, first order theory of the real closed field, capacity, finite state machine channel, 



\end{keyword}

\end{frontmatter}

\section{Introduction}
In this paper, wse provide a rigorous definition of a certain kind of characterization for capacity regions of a family of finite input, finite output, finite state Markoff networks which is based on optimization problems resulting out of calculating conditional mutual informations from a finite number of random variables taking values in finite sets, and by constraining the probability distributions corresponding to these random variables by polynomial constraints. Such a characterization, the rigorous definition of which is the subject of Section \ref{FinLC}, will hence forth be called an FLC. The definition of  FLC is partly motivated by single-letter characterizations in information theory. An admissible FLC corressponding to a class of Markoff networks  is an FLC for which, given a network in this class of networks, after substituting for variables of the FLC, the values which determine the conditional probabilities which define the network, we get an optimization problem, the closure of the feasible region of which is the capacity region of the network, and this is the case for all networks in this class of networks. Note that the FLC is independent of the particular network; however, the    variables in the FLC take values dependent on the conditional probability distributions which define the particular network.

 Let $F$ be an FLC which is admissible for a certain class $\mathbb D$ of Markoff channels. We will prove that this will imply that there exists an algorithm which takes as input $\epsilon > 0$ and the conditional distributions which define the particular network $d \in \mathbb D$, and provides as output, a real number $g \in [C_d - \epsilon, C_d + \epsilon]$.
 In other words, if there exists an admissible FLC for $\mathbb D$, approximating the capacity of a channels $\in \mathbb D$ within an additive constant is a computable problem. This is the subject of Section \ref{FLC}.
In \citep{EG}, an example of a family of FSMCs, denoted by $\mathscr{S}_{\lambda}$  is provided for which, capacity is either $\leq \frac{\lambda}{2}$ or $\geq \lambda$ for every channel in $\mathscr{S}_{\lambda}$, and  deciding, which is the case, is an undecidable problem, that is, there is no algorithm which decides whether the capacity is $\leq \frac{\lambda}{2}$ or $\geq \lambda$ for all channels $\in \mathscr{S}_{\lambda}$.  It will follow from this result and the computability result for approximating capacities if there exists an admissible FLC, stated above, that there exists no admissible FLC for  the family of  channels $\mathscr S_{\lambda}$. This is the subject of Section \ref{NE}.  

Further, based on an earlier result concerning undecidability of approximations for the emptiness problem in the theory of probabilistic finite automatas (PFAs) \citep{MHC}, we will provide a short outline of a  proof of the non-computability of the problem of approximating capacity within an additive constant for Markoff channels with partial state information. This is the subject of  \ref{A}. This proof is complete but for one missing step, see  \ref{Discc}.

\section{FLCs} \label{FinLC}
In this section, we define an FLC corresponding to a family of  Markoff  networks $\mathbb D$, and an admissible FLC corresponding to $\mathbb D$.
Consider a  network of $n$ users. The input space at User $i$ is $\mathbb A_i$ and the output space at User $i$ is $\mathbb B_i$. $\mathbb A_i, \mathbb B_i$ are assumed to be finite sets. The network also has a state which is an element of the set $\mathbb C$, a finite set. The action of the network is given by a a conditional probability $c(b_1, b_2, \ldots, b_n, s|a_1, a_2, \ldots, a_n,s')$.  This is the probability that the channel output is $b_1, b_2, \ldots, b_n$ and the state is $s$ given that the channel input is $a_1, a_2, \ldots a_n$ and the state at the previous time was $s'$. The notion of reliable communication over such a network is the subject of information theory, see for example, \cite{KG}, for discussion of capacity region of discrete memoryless networks. A achievable rate vector would be a sequence $(L_{ij}, 1 \leq i,j \leq n, i \neq j)$ where $L_{ij}$ denotes the rate of communication from User i to User j. The closure of the set of achievable rate vectors, when considered as a subjset of $\mathbb R^{n^2-n}$ is the capacity region of this network.  In this paper, we will be restricting attention to point-to-point Markoff channels for the main results. For a discussion of capacity of such channels, the reader is referred to \cite{Gallager} and \cite{EG}.

Consider a set $\mathbb D$ of such networks such that all networks in this set have the same input, output and state spaces.

In what follows, we would need the notion of computable real numbers and algebraic real numbers. A computable number is a real number that can be computed to within any desired precision by a finite, terminating algorithm, see, for example, Chapter 9 of  \citep{Minsky}. An algebraic real number is a real number which is the root of a non-zero polynomial with integer (or rational) coefficients. An algebraic number can be specified by specifying the polynomial of which it is a root and an interval $[a,b]$ to which it belongs, where $a,b$ are rational numbers.


An FLC corresponding to $\mathbb D$ consists of two elements: Representation and Constraints. These are defined as follows:
\begin{enumerate}
\item Representation: The representation has finite sets $\mathbb X_1, \mathbb X_2, \ldots, \mathbb X_k$ (for some $k$).
 Let $(X_1, X_2, \ldots, X_k)$ be a random vector on $\prod_{i=1}^k \mathbb X_i$. The representation consists of a set of equations:
\begin{align} \label{R}
\sum_{i,j=1}^n \beta_{ij}^{(r)}R_{ij} + 
\sum_{v=1}^{j_r} \alpha^{(r)}_{v} I(\vec{U}_v^{(r)}, \vec{Y}_v^{(r)}|\vec{Z}_v^{(r)})  \leq 0, 1 \leq r \leq N
\end{align} 
The above is a set of $N$ equations for some finite $N$. The superscript $r$ represents the number of the equation,  $\beta_{ij}^{(r)}, \alpha_u^{(r)}$ are computable real numbers for all $i,j,r$, and $j_r$ are integers.
  $\vec{U}_v^{(r)}$, $\vec{Y}_v^{(r)}$ and $\vec{Y}_v^{(r)}$ are all vectors with components belonging to the set $\{X_1, X_2, \ldots, X_k\}$ and such that these vectors have different $X_i$ as components.
The inequality ($\leq$) in some or all of the equations in (\ref{R}) may also be $<, >, \geq, =$. The representation is independent of the particular  $d \in \mathbb D$.
\item
Constraints: The random variables $X_1, X_2, \ldots X_n$ lead to a probability distribution on $\prod_{i=1}^k \mathbb X_i$. $p(x_1, x_2, \ldots, x_k)$ is the probability that $X_1 = x_1, X_2 = x_2, \ldots, X_k = x_k$. These are $\prod_{i=1}^k |\mathbb X_i|$ such probabilities (which satisfy constraints that probabilities add to one). Arrange these probabilties in a certain order (the order does not matter), and we get a vector of length $\prod_{i=1}^k |\mathbb X_i|$. Denote this vector by $\vec{p}$. 
The action of the network $d$ is denoted by $c_d$. Note that $c_d$ is a transition probability and corresponding to 
$b_i \in \mathbb B_i$, 
$s \in \mathbb C$, 
$a_i \in \mathbb A_i$, 
$s' \in \mathbb C$, 
we have the transition probability 
$c_d(b_1, \ldots, b_n, s|a_1, a_2, \ldots, a_n, s')$.
 These are, then, $\prod_{i=1}^n |\mathbb A_i| \times \prod_{i=1}^n | \mathbb B_i| \times |\mathbb C|^2$ values. Arrange them in some order and this leads to a vector $\vec{q}_d$ where $d$ refers to the particular network $d \in \mathbb D$. Note that the vector $\vec{p}$ is variable whereas the vector $\vec{q}_d$ is constant. The constraints are of the form
\begin{align}
\vec{f}(\vec{p}, \vec{q}_d) = 0
\end{align}
Here,  $\vec{f} = (f_1, \ldots, f_l)$  is a vector function for some finite $l$. Each $f_i$ is a polynomial in the components of $\vec{p}$ and $\vec{q_d}$. Also, the coefficients of these polynomials should be algebraic numbers. $\vec{f}$ is independent of the particular $d \in \mathbb D$.
Note again, that the only unknowns are components of $\vec{p}$, though the equations are polynomial in $\vec{p}$ and $\vec{q}_d$. This is because for the particular $d \in \mathbb D$, $q_d$ will be substituted for, by the conditional probability defining the network.
Also note that $q_d$ depends on $d \in \mathbb D$.
 The idea is that the function of $\vec{p}$ used  when determining the constraints should be independent of $\vec{q}_d$, but  different $\vec{q}_d$s in this function will result in different polynomials in $\vec{p}$.  These are the constraints on $\vec{p}$ and thus, the constraints on $(X_1, X_2, \ldots, X_k)$. Note that there are other constraints on $\vec{p}$, that the components of $\vec{p}$ are non-negative and add to $1$.
\end{enumerate}
The above pair of representation and constraints is called an FLC corresponding to $\mathbb D$. Consider the $(R_{ij}, 1 \leq i,j \leq n)$ which are feasible for the  optimization problem determined by the above Representation and Constraints, and denote by $\mathbb E_d$, the closure of this feasible  region for the network $d \in \mathbb D$. Note that this feasible region depends on $d \in \mathbb D$ because the optimization problem depends on $\vec{q}_d$, even though the FLC is independent of $\mathbb D$.
 If $\mathbb E_d$ is the same as the capacity region of $d$  $\forall d \in \mathbb D$, we call the above FLC an admissible FLC corresponding to $\mathbb D$.

See Subsection \ref{RRD} for a discussion on the motivation of this definition.


\section{Computability of approximating the capacities for a family of Markoff channels if there exists an admissible FLC} \label{FLC}
Note that for a point-to-point Markoff channel, $(R_{ij}, 1 \leq i,j \leq n)$ is just a $R$, the rate of communication from User 1 to User 2. Such an FSMC is determined by a conditional probability $c(y,s|x,s')$.
\begin{lem} \label{L2}
Consider a set $\mathbb D$ of FSMCs, all with the same input, output and state spaces, such that $c_d(y,s|x,s')$ is algebraic for all $x,y,s,s'$ and for all $d \in \mathbb D$. Let there exist an admissible FLC, denoted by $F$, corresponding to $\mathbb D$. Then, given any $\epsilon > 0$ small enough, given $d \in \mathbb D$, there exists an algorithm (independent of $d$) which outputs $\beta_d$ such that $|C_d -\beta_d| < \epsilon$ where $C_d$ is the capacity of $d$.
\end{lem}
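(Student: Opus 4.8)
The plan is to exhibit a single algorithm which, on input $\epsilon$ and a description of the algebraic numbers $c_d(y,s|x,s')$, computes an approximation to the optimal value of the optimization problem defined by $F$, which by admissibility equals $C_d$. Since the channel is point-to-point, the representation (\ref{R}) involves only the single rate $R$, so that $C_d = \sup R$ over the feasible region, and I would first record that this supremum is finite (bounded a priori, e.g. by the logarithm of the output-alphabet size), so that the feasible region may be intersected with the compact box $0 \le R \le R_{\max}$ without changing $C_d$. The feasible values of $\vec p$ form the set $S = \{\vec p : \vec f(\vec p, \vec q_d) = 0,\ \vec p \ge 0,\ \sum_x p_x = 1\}$; because the coefficients of $\vec f$ and the entries of $\vec q_d$ are algebraic, $S$ is a compact semialgebraic set defined over the real algebraic numbers, and, crucially, it is defined with no reference to the transcendental mutual-information terms, so it will be kept exact throughout.

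Next I would isolate the only non-semialgebraic ingredient, namely the conditional mutual informations $I(\vec U_v;\vec Y_v|\vec Z_v)$. Each such term is a fixed finite signed sum of expressions $\ell(\vec p)\log \ell(\vec p)$, where each $\ell$ is a linear function of the components of $\vec p$ taking values in $[0,1]$. Since $t \mapsto t\log t$ (with $0\log 0 = 0$) is continuous on $[0,1]$, I would, for a tolerance $\delta$ to be fixed later, compute an explicit polynomial $P_\delta$ with rational coefficients and $|t\log t - P_\delta(t)| < \delta$ on $[0,1]$; such a polynomial is computable, e.g. by rounding the coefficients of a Bernstein approximant of sufficiently high degree, whose uniform error is computably bounded. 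Substituting $P_\delta$ for every $t\log t$ term turns each mutual information into a polynomial $\tilde I_v(\vec p)$ with $|\tilde I_v - I_v| < \delta'$ uniformly on $S$, where $\delta'$ is a computable multiple of $\delta$. I would likewise replace the computable reals $\alpha^{(r)}_v, \beta^{(r)}_{ij}$ by rational approximations to within $\delta$ (possible since they are computable). After these substitutions the representation (\ref{R}) becomes a finite system of polynomial (in)equalities in $(\vec p, R)$ with rational coefficients.

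Now the approximate problem of maximising $R$ subject to $\vec p \in S$ and the polynomialised representation lives entirely inside the first-order theory of the real closed field: for a rational threshold $t$, the assertion that there exist $\vec p$ and $R$ satisfying all the polynomial constraints together with $R > t$ is a first-order sentence over the reals with algebraic (hence definable) constants, and is therefore decidable by Tarski's quantifier-elimination procedure. Running this decision for a sequence of thresholds and performing a binary search on $[0, R_{\max}]$ produces, to any prescribed precision, the optimal value $\tilde C_d$ of the approximate problem, which is the number I would output as $\beta_d$. Note that the algorithm depends on $d$ only through the algebraic data $\vec q_d$ substituted into $\vec f$, so it is uniform over $\mathbb D$, as required.

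The step I expect to be the crux is the error control, i.e. proving $|\tilde C_d - C_d| < \epsilon$ once $\delta$ is small. The favourable feature is that $S$ is never perturbed; only the representation inequalities are, and only through the uniformly small perturbations $I_v \to \tilde I_v$. For fixed $\vec p$, the best admissible $R$ is, where finite, the minimum of finitely many affine functions of the values $I_v(\vec p)$, hence Lipschitz in $(I_v)$ with a constant determined by the ratios $\alpha^{(r)}_v/\beta^{(r)}_{12}$; maximising over the compact set $S$ then gives $|\tilde C_d - C_d| = O(\delta')$, and choosing $\delta$ accordingly yields the bound. The genuine difficulty hides in the relations that are strict or are equalities: strictness is harmless, since the capacity region is the \emph{closure} of the feasible region and the supremum is unchanged by replacing $<$ with $\le$; an equality coupling a mutual information to $R$ is absorbed into the Lipschitz estimate, but an equality among mutual-information terms alone carves a lower-dimensional feasibility surface out of $S$ that the perturbation can move. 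To handle this I would approximate each transcendental term by a \emph{pair} of polynomials sandwiching it from above and below, substitute the member of each pair dictated by the direction of each relation to obtain an inner and an outer semialgebraic problem whose optimal values bracket $C_d$, and argue by compactness and continuity that the bracket closes as $\delta \to 0$. Establishing this continuity of the optimal value past the equality surfaces, equivalently ruling out a discontinuous jump in the achievable $R$ as $\vec p$ approaches such a surface, is where the real work lies.
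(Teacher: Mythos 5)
Your proposal is correct for the problem the paper actually solves, but it takes a genuinely different route. The paper's proof discretizes the \emph{domain}: it builds a finite rational $\delta$-net $\mathbb U$ of the probability simplex, invokes decidability of the first-order theory of the real closed field \citep{T} only to test, for each grid point $u$, whether the polynomial constraints together with the box $|p(\cdot)-u(\cdot)|\leq\delta$ are feasible (yielding the finite set $\mathbb S$), and then numerically evaluates the true, transcendental objective $\min_{r}\sum_{v}\alpha^{(r)}_{v}I(\cdot)$ at the surviving rational grid points, controlling the error by the explicit entropy-continuity bound (\ref{UC}) of \citep{PS}. You instead approximate the \emph{objective}: you keep the semialgebraic feasible set exact, replace each $t\log t$ term by a rational polynomial approximant so that the entire optimization becomes a family of sentences of the real closed field, and extract the value by binary search over thresholds. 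Both arguments rest on the same two pillars (Tarski decidability plus uniform continuity of mutual information), but apply them to complementary halves of the problem: the paper perturbs the domain and keeps the objective exact, while you perturb the objective and keep the domain exact. Your version buys a cleaner error analysis --- a one-sided Lipschitz estimate, with no need for the two-sided Hausdorff-type argument (``every feasible point is near a kept grid point and every kept grid point is near a feasible point'') that the paper's construction of $\mathbb S$ implicitly requires --- and it makes transparent the paper's closing conjecture that the Representation could involve functions far more general than mutual information, since all you use is computable uniform polynomial approximability. The paper's version places a lighter burden on the decision procedure (only feasibility of polynomial systems, never a polynomialized objective), which is a modest distinction since both reductions are equally effective.

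Concerning the difficulty you flag at the end: it is a real feature of the general definition (\ref{R}), which permits relations with $\beta^{(r)}_{ij}=0$, relations of $\geq$ type in $R$, or equalities coupling mutual informations alone; perturbing such relations perturbs the feasible set in $\vec p$, and no computable rate for the closing of your inner/outer bracket exists in general. But the paper does not handle this case either: its proof simply asserts that for a point-to-point channel the representation consists of inequalities of the form (\ref{R2}), i.e.\ $R\leq\sum_{v}\alpha^{(r)}_{v}I(\cdot)$ for every $r$, and under that reading your Lipschitz argument already closes completely and the sandwich construction is unnecessary. So your final paragraph is not a gap in your proof relative to the paper's own standard; rather, you identified, and were honest about, a case that the paper silently normalizes away.
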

\begin{pf}
Fix $d \in \mathbb D$. First consider the constraints of $F$. These are polynomial inequalities in $\vec{p}$.
 Denote by $\mathbb V$, the set of all these constraints. Given an algebraic number $\delta > 0$.
Construct a finite set $\mathbb U$ consisting of probability distributions which are elements of $\mathcal P(\mathbb X_1 \times \mathbb X_2 \times \cdots \times \mathbb X_n)$ such that 
for any $s \in \mathcal P(\mathbb X_1 \times \mathbb X_2 \times \cdots \times \mathbb X_n)$, $\exists u \in \mathbb U$ such that $||s-u||_1 < \delta$
and that,
$u(x_1, x_2, \ldots, x_n)$ is rational $\forall x_i \in \mathbb X_i$, $1 \leq i \leq n$, $\forall u \in \mathbb  U$.
Pick a particular $u \in \mathbb U$. Recall that $\vec{p}$ is a vector consisting of $p(x_1, x_2, \ldots, x_n)$ as $x_1, x_2, \ldots, x_n$ vary. Consider the set of constraints which includes the constraints $\mathbb V$ along with the constraints $-\delta \leq p(x_1, x_2, \ldots x_n) - u(x_1, x_2, \ldots, x_n) \leq \delta$ (we could choose, in this inequality, a number much smaller than $\delta$, but $\delta$ suffices), $\forall x_i \in \mathbb X_i, 1 \leq i \leq n$. Denote the set of all these inequalities by $\mathbb W$. The first order theory of the real closed field is decidable \citep{T}(see also \citep{T2}, \citep{T3}),  and thus, determining whether a feasible solution exists to the set of inequalities $\mathbb W$ is decidable.
Construct a set $\mathbb S$ as follows: $u \in \mathbb S$ if and only $u \in \mathbb U$ and  if there exists a feasible solution to the set of inequalities $\mathbb W$. The set $\mathbb S$ should be thought of as an `approximation' (consisting of a finite number of elements) of the feasible region corresponding to the constraints in $F$.

Next, consider the representation. The representation of an  FLC corresponding to a point-to-point Markoff channel, by noting (\ref{R}) consists of inequalities of the following form: 
\begin{align} \label{R2}
R \leq
\sum_{v=1}^{j_r} \alpha^{(r)}_{v} I(\vec{U}_v^{(r)}, \vec{Y}_v^{(r)}|\vec{Z}_v^{(r)}), 1 \leq r \leq N,
\end{align} 
in other words,
\begin{align}
R \leq \min_{r \in  \{1, 2, \ldots, N\}} \sum_{v=1}^{j_r} \alpha^{(r)}_{v} I(\vec{U}_v^{(r)}, \vec{Y}_v^{(r)}|\vec{Z}_v^{(r)})
\end{align}
The capacity of the channel is the supremum of such $R$ over the $(X_1, X_2, \ldots, X_r)$ which satisfy the constraints of $F$. Recall that $ \vec{U}_v^{(r)}, \vec{Y}_v^{(r)},\vec{Z}_v^{(r)}$ are random vectors whose components belong to the set $\{X_1, X_2, \ldots, X_r\}$.
Note that conditional mutual information is uniformly continuous and that, a quantification of this fact follows by noting  that $I(X;Y|Z)$ = $H(X,Z)+H(Y,Z)-H(Z)-H(X,Y,Z)$ and noting (6) in \citep{PS}, which is the following:
For a finite set $\mathbb G$, for $p_1, p_2 \in \mathcal P(\mathbb G)$, $||p_1-p_2||_1  \leq \frac{1}{2}$, 
\begin{align}\label{UC}
|H(p_1) - H(p_2)| \leq ||p_1-p_2||_1 \ln \frac{|\mathbb G|}{||p_1-p_2||_1}
\end{align}
It follows, since $F$ is admissible for $\mathbb D$,  by how the set $\mathbb S$ is constructed from the constraints of $F$,  by noting (\ref{UC}), that for any $d \in \mathbb D$, that there exists a $\delta > 0$ such that  
\begin{align}
C_d - \frac{\epsilon}{10} \leq \ \triangleq \max_{u \in \mathbb \mathbb S} \min_{1 \leq r \leq N} \sum_{v=1}^{j_r} \alpha^{(r)}_{v} I(\vec{U}_u^{(v)}, \vec{Y}_u^{(r)}|\vec{Z}_v^{(r)})\leq C_d + \frac{\epsilon}{10} 
\end{align}
where in the above equation, the various $I(\vec{U}_v^{(r)}, \vec{Y}_v^{(r)}|\vec{Z}_v^{(r)})$ are calculated corresponding to the particular $(X_1, X_2, \ldots, X_k)$ which is the random vector corresponding to $u \in \mathbb S$.Denote
\begin{align}
\gamma \triangleq \max_{u \in \mathbb \mathbb S} \min_{1 \leq r \leq N} \sum_{v=1}^{j_r} \alpha^{(r)}_{v} I(\vec{U}_v^{(r)}, \vec{Y}_v^{(r)}|\vec{Z}_v^{(r)})
\end{align}
There would be an error in calculation of $\gamma$ by the algorithm because of error in computation of the conditional mutual informations and error in computation of $\alpha^{(r)}_v$. The latter are computable by the definition of an FLC, and thus, can be approximated within an arbitrary accuracy by means of an algorithm. The conditional mutual informations can also be approximated within an arbitrary accuracy by means of an algorithm because all probabilities entering the calculations of these mutual informations are rational numbers; this follows because we compute mutual informations corresponding to $u \in \mathbb S$, and by construction, 
the probability distribution corresponding to $u$ has $p(x_1, x_2, \ldots, x_k)$ rational for all $x_i \in \mathbb X_i$, $1 \leq i \leq k$. 
It follows, then, that the error in computation of $\gamma$  can be made $\leq \frac{\epsilon}{10}$.
 Denote the value of $\gamma$ after this calculation error by $\beta$. This $\beta$ satisfies the properties of the $\beta$ required in the lemma.
\end{pf}

\section{Non-existence of an admissible FLC for a certain family of FSMCs } \label{NE}
\begin{thm}\label{TT}
There exists a set $\mathbb D$ of FSMCs with the same input, output and state spaces for which given any FLC, this FLC is not admissible for $\mathbb D$.
\end{thm}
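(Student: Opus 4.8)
The plan is to derive the theorem as a contrapositive of Lemma \ref{L2}, instantiated on the family $\mathscr{S}_{\lambda}$ of \citep{EG}. Concretely, I would set $\mathbb{D} = \mathscr{S}_{\lambda}$ and argue by contradiction: if some FLC were admissible for this $\mathbb{D}$, then Lemma \ref{L2} would manufacture an algorithm approximating capacity within any prescribed additive accuracy, and such an algorithm would decide the dichotomy that \citep{EG} proves undecidable.

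First I would check that $\mathscr{S}_{\lambda}$ satisfies the standing hypothesis of Lemma \ref{L2}, namely that every transition probability $c_d(y,s \mid x,s')$ is algebraic. This holds because the channels constructed in \citep{EG} have rational transition probabilities, and rationals are algebraic; hence the lemma applies verbatim to each $d \in \mathscr{S}_{\lambda}$. Next, suppose for contradiction that there is an FLC $F$ admissible for $\mathbb{D} = \mathscr{S}_{\lambda}$. Fix any $\epsilon$ with $0 < \epsilon < \lambda/4$. By Lemma \ref{L2} there is an algorithm which, on input $d \in \mathbb{D}$, outputs $\beta_d$ with $|C_d - \beta_d| < \epsilon$.

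The key step is the threshold comparison. Recall from \citep{EG} that for each $d \in \mathscr{S}_{\lambda}$ either $C_d \leq \lambda/2$ or $C_d \geq \lambda$. In the first case $\beta_d < \lambda/2 + \epsilon < 3\lambda/4$, and in the second case $\beta_d > \lambda - \epsilon > 3\lambda/4$; the choice $\epsilon < \lambda/4$ is exactly what keeps the two possible values of $\beta_d$ on strictly opposite sides of the threshold $3\lambda/4$. Thus comparing the computed $\beta_d$ against $3\lambda/4$ decides which of the two cases holds, yielding an algorithm that decides, for every $d \in \mathscr{S}_{\lambda}$, whether $C_d \leq \lambda/2$ or $C_d \geq \lambda$. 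This contradicts the undecidability of precisely this problem established in \citep{EG}, so no admissible FLC can exist for $\mathbb{D} = \mathscr{S}_{\lambda}$, which is the assertion of the theorem.

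Since the reduction itself is clean, the only genuine points of care, and hence the \emph{main obstacle}, are verifying the two hypotheses that make the reduction go through: (i) confirming that the algebraicity requirement of Lemma \ref{L2} is actually met by the concrete construction of $\mathscr{S}_{\lambda}$ in \citep{EG}, since non-algebraic channel parameters would stall the approximation algorithm, and (ii) confirming that the undecidability result of \citep{EG} is stated for this very family and dichotomy, so that the decision procedure I build is genuinely ruled out. Everything beyond these two checks is routine bookkeeping around the separation constant.
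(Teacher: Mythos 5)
Your proposal is correct and follows essentially the same route as the paper's own proof: instantiate $\mathbb D = \mathscr{S}_{\lambda}$, verify the rational (hence algebraic) transition probabilities, apply Lemma \ref{L2} to get an additive-$\epsilon$ approximation algorithm, and use the capacity dichotomy of \citep{EG} to turn that algorithm into a decision procedure contradicting Main Result 2 there. The only cosmetic difference is your explicit threshold $3\lambda/4$ with $\epsilon < \lambda/4$, where the paper picks $\epsilon = \lambda/20$ and reasons case-by-case; the substance is identical.
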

\begin{pf}
Consider the set of channels 
$\mathscr S_{\lambda}
$, where the latter is defined  in \citep{EG}. Note that $c(y,s|x,s')$ (the notation in \citep{EG} is $p(y,s|x,s')$ instead of $c(y,s|x,s')$) is  rational 
$\forall$ channels $\in
\mathscr S_{\lambda}
$, $\forall, x,y,s,s'$, by construction in \citep{EG}, and thus algebraic. 
Note, further, that FSMCs in $\mathscr S_{\lambda}$ have the same input, output and state spaces (number of input symbols = $10$, output symbols = $2$ and state symbols = $62$).  Main Result $2$ in \citep{EG} states   that for a given $\lambda$, all channels in $\mathscr S_{\lambda}$ have capacity $\geq \lambda$ or $\leq \frac{\lambda}{2}$. Let $\lambda > 0$. Let there be an admissible FLC for $\mathscr S_{\lambda}$. Choose $\epsilon = \frac{\lambda}{20}$ ( $20$ in the denominator is chosen arbitrarily in a way that it is significantly less than $\frac{1}{2}$). Consider a channel $d \in \mathscr{S}_{\lambda}$ with capacity $C$.  By Lemma \ref{L2}, there exists an algorithm which says that the capacity of $d$ $\in [C-\epsilon, C+\epsilon]$. If capacity of $d$ is  $\leq \frac{\lambda}{2}$, it follows, as a consequence of the output of the algorithm,  that the capacity of d $\leq \frac{\lambda}{2} + \epsilon$ which is $<\lambda$, and thus, the capacity of this channel is $\leq \frac{\lambda}{2}$. Similarly, if capacity of $d$ is $\geq \lambda$, it follows as a consequence of the output of the algorithm, that the capacity of $d$ $\geq \lambda - \epsilon $ which is $> \frac{\lambda}{2}$, and thus, the capacity of $d$ is $\geq \lambda$.
%
Thus, if there exists an admissible FLC for the set of channels $\mathscr S_{\lambda}$, it is decidable whether the capacity of the channel is $\leq \frac{\lambda}{2}$ or the capacity of the channel $\geq {\lambda}$, and this contradicts Main Result $2$ in \citep{EG}.  The
 only possibility, thus, is that there is no admissible FLC for the class of channels $\mathscr{S}_ {\lambda}$ for $\lambda > 0$.
\end{pf}



\section{Recapitulation, discussions and research directions}

\subsection{Recapitulation}

In this paper, it was proved that finite letter characterizations as defined in this paper do not exist for a certain class of Markoff networks, that is, finite state machine channels. The idea of the proof was to use non-computability for approximating capacity for this class of channels \cite{EG}, and in this paper, we prove capacity can be approximated for FLCs by means of an algorithm. It follows that FLCs do not exist for this class of channels. In the appendix, a short proof non-computability of approximating channel capacity for Markoff channels with partial state information is given.

\subsection{Discussions and research directions} \label{RRD}

\begin{itemize}

\item
Motivation for the definition of an FLC and an admissible FLC:

The  definition of FLCs  is partly motivated by existing single letter characterizations for capacity regions of networks in information theory. Note, for example, that the capacity region of a point-to-point memoryless channel, a memoryless multiple-access channel, the Marton region for a broadcast channel \citep{KG}, can all be put in the form of the above mentioned optimization problem. We have abstracted out the properties of the definitions of these characterizations, and made them more general (in the constraints) when making this definition. (The other motivation is that with this definition, we are able to prove the theorems we proved).

As an example, consider the Han-Kobayashi region for the capacity region of the interference channel $c(y_1,y_2|x_1,x_2)$ (the notation used is from \cite{KG}):

$R_1 < I(X_1;Y_1|U_2, Q)$

$R_2 < I(X_2;Y_2|U_1, Q)$

$R_1+R_2 < I(X_1,U_2;Y_1|Q)+I(X_2;Y_2|U_1,U_2,Q)$

$R_1+R_2 <I(X_2,U_1;Y_2|Q)+I(X_1;Y_1|U_1,U_2,Q)$

$R_1+R_2 <I(X_1,U_2;Y_1|U_1,Q)+I(X_2,U_1;Y_2|U_2,Q)$

$2R_1+R_2<I(X_1,U_2;Y_1|Q)+I(X_1;Y_1|U_1,U_2,Q)+I(X_2,U_1;Y_2|U_2,Q)$

$R_1+2R_2<I(X_2,U_1;Y_2|Q)+I(X_2;Y_2|U_1,U_2,Q)+I(X_1,U_2;Y_1|U_1,Q)$,

for some PMF $p(q)p(u_1,x_1|q)p(u_2,x_2|q)$ (and some cardinality bounds on sets).

The seven random variables $Q,U_1,U_2,X_1,X_2,Y_1,Y_2$: can be renamed $X_1, X_2, \ldots, X_7$. That $p(q,u_1,x_1,u_2,x_2)$ is constrained to be of the form $p(q)p(u_1,x_1|q)p(u_2,x_2|q)$ can be written
as polynomial equations in $\vec{p}$. That $p(y_1,y_2|x_1,x_2)$ should be precisely the action of the interference channel $c(y_1,y_2|x_1,x_2)$ can
be written as polynomial constraints in $\vec{p}$ and $\vec{q}_c$. That probabilities add to one lead to polynomial equations.
Thus, achievable rates corresponding to the Han-Kobayashi region can be written in the form of Representation and 
Constraints. 

As a simpler example, consider the achievable rate  region for a DMC, given by a conditional probability $p(y|x)$ where $x \in \mathbb X$ and $y \in \mathbb Y$. This region given by $R < I(X;Y)$ where $p_X$ is a probability distribution on $\mathbb X$ and $p_{Y|X}$ is restricted to be the action of the channel. Clearly, this region can also be written in the above form of Representation and Constraints.

Next, consider the Marton's inner bound for a broadcast channel $p(y_1, y_2|x)$ given by (the notation used is from \cite{KG})

$R_1 \leq H(Y_1)$

$R_2 \leq I(U;Y_2)$

$R_1 + R_2 \leq H(Y_1|U) + I(U;Y_2)$

for some pmf $p(u|x)$.

By a reasoning similar to that for the Han-Kobayashi region described above, this region can also be in the form of Representation and Constraints. 

The same is the case for other existing characterizations in information theory.

\item
Note that in the set of channels $\mathscr S_{\lambda}$, the gap between $\frac{\lambda}{2}$ and ${\lambda}$ is a well
quantified gap (equal to $\frac{\lambda}{2}$). For this reason, we can prove, by slight modifications to the proofs of Lemma \ref{L2} 
and  Theorem \ref{TT} that there exists an $\epsilon$ small enough that no FLC even approximates the capacity of the family of
 channels $\mathscr S_{\lambda}$. In other words, there exists no FLC such that $\forall d \in \mathscr S_{\lambda}$  the feasible region of the FLC corresponding to $  d $ is  $[0, C_d + \gamma]$ or $[0, C_d - \gamma]$ for any $\gamma \leq \epsilon$ where $C_d$ is the
 capacity of $d$.

\item
 It would also be worthwhile to explore which sets of channels and networks can this result be generalized to. For example, the knowledge of initial state is necessary at the transmitter, both in \citep{EG} and the example we present in  \ref{A} for construction of Markoff channels for which an undecidability result holds in order to approximate capacity, which leads to a result for non-existence of FLCs for the class of channels $\mathcal S_{\mathcal \lambda}$ defined in \citep{EG}. However, it is unclear what happens with channels where initial state is not known at the transmitter. For example, for indecomposable channels \citep{Gallager}, channels for which the knowledge of initial state dies down with time, and thus, the knowledge of initial state is not necessary in the sense that the capacity of the channel is independent of whether the initial state is known or not, and is independent of the initial state (in the language of \citep{Gallager}, $\Bar{C} = \underbar{C}$), it is unclear whether such a result will hold, and it would be important to see, what happens for this class of channels. The authors doubt that an undecidability result for approximating capacity can be proved for indecomposable channels by using the technique in \citep{EG} or the technique in \ref{A} in this paper; that the initial state is known at the transmitter and that, the channel action might vary depending on the initial state (in the language of \citep{Gallager} $\Bar{C}$ and $\underbar{C}$ might be different) seems to be crucial in both \citep{EG} and \ref{A}. That said, the authors would also like to speculate that admissible FLCs do not exist in general, for indecomposable channels. If this is indeed the case, it needs a proof via one technique or the other, and if this is not the case, it needs to be proved that FLCs indeed exist. It the opposite is true, that is, admissible FLCs do exist for indecomposable channels, that would require a proof. In general, it would be worthwhile to explore, for which sets of Markoff channels, with or without feedback, and with or without partial information, do FLCs exist. The authors emphasize that much of this paragraph is speculation.

Further, 
 it would be important to explore whether we can prove non-computability of approximating capacity results for general networks, in particular, general memoryless networks.  A memoryless network consists of $n$ users and the action of the network can be described by a transition probability $p(y_1, y_2, \ldots, y_n|x_1, x_2, \ldots, x_n)$, where this transition probability represents the probability that the output at User $i$ is $y_i$, $1 \leq i \leq n$ if the input at User i is $x_i$, $1 \leq i \leq n$. If we can prove a non-computability result of approximation of capacity region for memoryless networks for fixed input and output spaces, that may  also imply the non-existence of FLCs for these networks (needs proof) , though the author doubts that proving such a non-computability result would be possible by the method used in \citep{EG} and and the construction in \ref{A} in this paper: that there is memory in the channel seems to be fundamental in both these cases. That said, as for the case of indecomposable channels, the authors would like to speculate that admissible FLCs do not exist, in general, for memoryless networks. If this is indeed the case, it needs proof via one technique or the other, and if this is not the case, it needs to be proved that FLCs indeed exist. Memoryless networks are the simplest kinds of networks, and a positive or negative result for existence or non-existence of FLCs for these networks will shed light on the case for general networks. It neesds to be said that whether  FLCs exist or not is not even known for the simple case of the broadcast channel The authors emphasize, as in the previous paragraph,  that much of this paragraph is speculation.

\item
From a mathematical perspective, it would be worthwhile to explore  whether the theorem in this paper can be proved without resorting to the `heavy-duty machinery' of the decidability of the first order theory of the real closed field. Also from a mathematical perspective, based on the proof of Lemma \ref{L2}, the authors conjecture that  the Representation part of the FLC, as defined in Section \ref{FinLC},  can consist of functions much more general than mutual information and further, it may also be non-linear in $(R_{ij}, 1 \leq i,j \leq N)$, and still, Lemma \ref{L2} and Theorem \ref{TT} will hold. It may be worthwhile to see, to what extent, these generalizations are possible.
\end{itemize}

\section{Acknowledgements}
The idea of using undecidability results  as a way of proving the non-existence of admissible finite letter characterizations is due to Prof. John Tsitsiklis. Further, the idea of using the fact that the first order theory of the real closed field is decidable in order to prove Lemma \ref{L2} is also due to Prof. Tsitsiklis. Beyond that, the author profusely thanks Prof. Tsitsiklis for various insightful discussions on this problem.





  \bibliographystyle{elsarticle-num} 


\appendix

\section{Non-computability of approximating capacity for time-invariant markoff channels with partial state information} \label{A}

In this appendix, we provide an outline of a proof of the non-computability of approximating, within an additive constant, the capacity of Markoff channels with partial state information by using prior results in the theory of PFAs concerning undecidability of approximations related to the emptiness problem, as stated and proved in \citep{MHC}. The idea of PFAs is also used in \citep{EG}; here, we provide a short construction and proof based on the above mentioned result in \citep{MHC}.  The authors found this construction without knowledge of the work of Elkouss et al.. The outline is a complete proof but for one missing step discussed in  \ref{Discc}. This appendix is written in discursive style.

Consider the definition of PFA in Section 2.3 in \citep{MHC}. The details of this definition are the following (we cut and paste from \citep{MHC}): A PFA, $\mathcal M$ is defined by a quintuple $\mathcal M = (Q, \Sigma, T, s_1, s_n)$ where $Q$ is a set of $n$ states, $\Sigma$ is the input alphabet, $T$ is a set of $n \times n$ row-stochastic transition matrices, one for each symbol in $\Sigma$, $s_1 \in \mathcal Q$ is the initial state of the PFA, and $s_n \in \mathcal Q$ is an accepting state. The state transition is determined as follows:
\begin{itemize}
\item
the current input symbol $a$ determines a transition matrix $\mathbb M_a$,
\item
the current state $s_i$ determines the row $\mathbb M_a[i,\cdot]$, a probability distribution over the possible next states, and
\item
the state changes according to the probability distribution $\mathbb M_a[i,\cdot]$
\end{itemize} 
The accepting state $s_n$ is absorbing, that is, $\mathbb M_a[n,n] = 1$, $\forall a \in  \Sigma$. 

Corresponding to this  PFA, we  construct an FSMC as follows: 
The channel takes as input, an element of the set $\Sigma$. The state space of the channel is $Q$, and the initial state is $s_1$. $s_n$ is the only `good' state in the sense that will become clear below. The channel acts as follows: if the channel is in state $s_i$, and the input to the channel is $a$, the channel transitions to state $s_{i'}$ with probability $\mathbb M_a[i,i']$. Partial state information of whether the channel is in state $s_n$ or not is known at both the transmitter and the receiver. Further, when the channel is in state $s_n$, transmission of $K$ bits (think of $K$ as being large) per channel use is possible over the channel, otherwise, the channel outputs an error symbol $`e'$. 


The definition of channel capacity that we use is generalized capacity, which, in this scenario, is described best by an example as stated in the second column of Page 1 of \citep{SS}. In this example, the channel is binary input, binary output, and is such that:
\begin{itemize}
\item
with probability $1-q$, the channel reproduces the input sequence error free
\item
with probability $q$, the channel introduces errors with probability $h^{-1}(0.5)$, where $h(x)$ is the binary entropy function in bits. 
\end{itemize}
If the encoder knew which of the two states is in effect, it could adapt the rate, and the average rate of communication would be $1 - \frac{q}{2}$. However, if the encoder did not know, which state is in effect, then, reliable communication would only be possible at rates $< \frac{1}{2}$. The reader is referred to \cite{SS} for details.


For the channel we have constructed above, partial state knowledge is available at the transmitter and the receiver  in the sense that the transmitter and the receiver know, whether the channel is in state $s_n$ or not. Further, if the channel enters state $s_n$, it stays in state $s_n$. Also, if the channel is in state $s_n$, $K$ bits can be transmitted reliably over the channel per channel use. Thus, once the channel enters state $s_n$, $K$ bits are reliably communicated per channel use over the channel, that point onwards.  Note, now, the definition of $L(\mathcal M, \tau)$ in \citep{MHC}. This is the set of all infinite-length strings such that the PFA ends in state $s_n$ with probability $> \tau$. The emptiness problem is: given a PFA $\mathcal M$, and given a threshold $\tau$, determine whether $L(\mathcal M, \tau)$ is empty or not. See \citep{MHC} for precise details of the definition of $L(\mathcal M, \tau)$ and for the statement of the emptiness problem. 
Note that
\begin{align}
& \Pr(\mbox{PFA enters state}\  s_n)  \nonumber \\
= &  \sum_{\alpha = 0}^\infty \Pr (\mbox{PFA enters state}\ s_n\ \mbox{at time} \ \alpha)
\end{align}
The above holds because probability of a countable union of disjoint events is equal to the sum of the probabilities of these events.
Now, the PFA enters $s_n$ with probability $> \tau$. This implies, from the above, that for any $\gamma > 0$, there exists a $\psi$ such that the probability that the PFA enters state $s_n$ at time $< \psi$ is $\geq \tau - \gamma$.

Consider the following communication scheme: if the channel does enter state $s_n$ at time $<\psi$, communicate bits reliably over the channel starting at this point, else, declare that no reliable communication is possible over the channel. 
Recall the example of generalized channel capacity discussed above.
For the channel that we have constructed from the PFA as discussed above, it follows, by the discussion above, that  if $L(\mathcal M, \tau_1)$ is non-empty,  reliable communication can be accomplished over this channel at a rate $R > \tau_1 K - \kappa$ for any $\kappa > 0$ by use of the above coding scheme (adapted to $\tau = \tau_1$).  During the time slots when the channel is not in state $s_n$, at most $\log |\Sigma|$ bits per channel use can be communicated over the channel by sending bits by coding them into the channel input and by noting the partial channel state at the output.
 It follows, that if  $L(\mathcal M, \tau_2)$ is empty, at most $\tau_2 K + \log|\Sigma|$ bits can be communicated reliably over the channel per channel use, irrespective of the coding scheme. Assume that $K$ is much larger compared to  $|Q|$ and $|\Sigma|$ and assume in what follows that $\delta_1, \delta_2$ are small.

We state Corollary 3.4 from \citep{MHC}; this is the key result on which our result concerning uncomputability of approximating capacity will be based:
For any fixed $\epsilon$, $0 < \epsilon < 1$, the following is undecidable: given a PFA for which one of the two cases hold: 
\begin{itemize}
\item
(1) the PFA accepts some string with probability $> 1-\epsilon$.
\item
(2) the PFA accepts no string with probability $> \epsilon$.
\end{itemize}
Deciding, whether case (1) holds, is undecidable.
\begin{lem} \label{UMDP}
Approximating the capacity of a Markoff channel with partial state information at the encoder and decoder within an additive constant is an uncomputable problem in general.
\end{lem}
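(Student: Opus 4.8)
The plan is to reduce the undecidable decision problem of Corollary 3.4 in \citep{MHC} to the problem of approximating, within a fixed additive constant, the capacity of the FSMC built from a PFA as above. Fix the parameter $\epsilon \in (0,1/2)$ of that corollary; given any PFA $\mathcal{M}$ satisfying one of its two cases, form the associated channel, with good-state rate $K$ chosen large relative to $\log|\Sigma|$. I would then show that the two cases force the capacity into two well-separated regimes, so that an additive-constant approximation of capacity would decide which case holds, contradicting undecidability.

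Concretely, the bounds already developed in this appendix supply the two regimes. If case (1) holds then $\mathcal{M}$ accepts some string with probability $>1-\epsilon$, so $L(\mathcal{M},1-\epsilon)$ is non-empty; the coding scheme above (wait for entry into $s_n$ before the cutoff $\psi$, then transmit at rate near $K$) achieves every rate $R>(1-\epsilon)K-\kappa$, whence $C \geq (1-\epsilon)K$. If case (2) holds then $\mathcal{M}$ accepts no string with probability $>\epsilon$, so $L(\mathcal{M},\epsilon)$ is empty, and the converse above yields $C \leq \epsilon K + \log|\Sigma|$. Since $\epsilon<1/2$ and $K\gg\log|\Sigma|$, the gap $(1-2\epsilon)K-\log|\Sigma|$ between the lower value in case (1) and the upper value in case (2) is positive and grows without bound as $K$ increases. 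Fixing any additive constant $\mu$ below half this gap, an algorithm that outputs a number within $\mu$ of $C$ would land above the midpoint exactly in case (1) and below it exactly in case (2), thereby deciding the undecidable problem of Corollary 3.4 --- the desired contradiction. As the construction applies to every PFA in the promise, no algorithm can approximate capacity within $\mu$ over this family, which gives the claimed uncomputability in general.

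The hard part will be the converse bound in case (2), which is precisely the missing step flagged in \ref{Discc}. One must show that when $L(\mathcal{M},\epsilon)$ is empty, no coding scheme --- under the generalized-capacity notion of \citep{SS}, and ranging over all input strategies by which the encoder may steer the Markov state toward $s_n$ --- can reliably convey more than $\epsilon K + \log|\Sigma|$ bits per channel use. The achievability direction is comparatively routine once one commits to the wait-then-transmit scheme; the difficulty is a matching converse that correctly accounts for the encoder's freedom to influence the state transitions and for the limiting operations implicit in the generalized capacity.
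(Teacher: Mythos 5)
Your reduction is the same as the paper's: construct the channel from a PFA in the promise class of Corollary 3.4 of \citep{MHC}, get capacity at least roughly $(1-\epsilon)K$ in case (1) via the wait-then-transmit scheme, get the bound $\epsilon K + \log|\Sigma|$ in case (2), and observe that an approximation algorithm with additive error below half the gap would decide between the two cases. The bookkeeping differs only cosmetically (the paper writes $C_l = \delta_1 K + \log|\Sigma|$, $C_u = (1-\delta_1)K - \delta_2$, $\Delta = (C_u - C_l)/8$ and tests against $C_l + 2\Delta$ and $C_u - 2\Delta$ rather than a midpoint).

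There are, however, two problems in how you handle the converse. First, you leave it entirely open, whereas the paper's proof does contain an argument for what is actually its crux: since the encoder knows whether the channel is in $s_n$, its input sequence is adaptive, and one must rule out that this partial state feedback drives the channel into $s_n$ with probability above the PFA's acceptance threshold, which would invalidate the case-(2) bound. The paper disposes of this with a string-extension argument: an adaptive input string is only relevant until the state enters the absorbing state $s_n$; on the event that $s_n$ is never entered it is an ordinary (possibly infinite) open-loop string, so using that same string on the original PFA shows that non-emptiness of $L(\mathcal M, \tau)$ under partial state feedback implies non-emptiness of $L(\mathcal M, \tau)$ for the original PFA. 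Granting that, the case-(2) bound follows from the construction itself (at most $K$ bits per use once in $s_n$, at most $\log|\Sigma|$ per use outside it, and $L(\mathcal M, \epsilon)$ empty). You correctly sense that encoder adaptivity is the delicate point, but you defer it instead of resolving it. Second, you misidentify the missing step: the gap flagged in \ref{Discc} is not this converse. It is the question of whether the PFAs of Corollary 3.4 have state and input alphabets of fixed cardinality, hence whether the constructed channels form a family with common input, output and state spaces --- which is what would be needed to push this construction to an analogue of Theorem \ref{TT} (non-existence of FLCs), and is immaterial to the lemma as stated, since the lemma claims uncomputability ``in general,'' over channels of varying alphabet sizes.
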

\begin{proof}
Consider the channels constructed from  PFAs considered in Corollary 3.4 in \citep{MHC} by following the procedure mentioned in this appendix, with $\epsilon$ in Corollary 3.4 in \citep{MHC} made equal to $\delta_1$. Denote $C_l \triangleq \delta_1 K + \log | \Sigma|$ and denote $C_u \triangleq (1-\delta_1)K - \delta_2$. Note that $C_u$ is strictly greater than $C_l$ for sufficiently large $K$.  It follows, by the above explanation, that if the PFA in Corollary 3.4 in \citep{MHC} falls under case (1), then the capacity of the channel $> C_u$. Similarly, if the PFA in Corollary 3.4 in \citep{MHC} falls under case (2), then the capacity of the channel $<C_l$. Note here, that in this channel coding problem, the information whether the state is $s_n$ or not is available at the transmitter. This changes the underlying PFA a little in the sense that there is partial knowlege of the state, that is, whether  state is $s_n$ or not, and this can possibly increase the $\tau$ for which $L(\mathcal M, \tau)$ is non-empty, and thus, 
 it can possibly increase the capacity of the channel by making $a$ depend on whether the channel is in state $s_n$ or not. This however is not the case because consider a string which is used when there is partial knowledge of the state. This string will end when the state enters $s_n$, but for the case when the state has not entered $s_n$, the string will continue on for possibly infinite length of time. Use this same string when there is no partial knowledge of state in the PFA, and it follows that if $L(\mathcal M, \tau)$ is non-empty if there is partial state information of the above form for the PFA, then $L(\mathcal M, \tau)$ is non-empty for the original PFA too.
 With this explanation, denote $\Delta \triangleq \frac{C_u-C_l}{8}$. It then follows, by Corollary 3.4 in \citep{MHC}, that the capacity of the channel cannot be approximated with an additive constant of $\Delta$ by means of an algorithm.
This is because, if the contrary were true, we would be able to say, whether the capacity of the channel is $< C_l + 2 \Delta$ or $> C_u - 2 \Delta$. It follows, then, that we would be able to say whether the capacity of the channel is $<C_l$ or $>C_u$ (because at least one of these holds), from which, we would be able to say for the PFA, whether case (1) or case (2) holds in Corollary 3.4 in \citep{MHC}, and this is undecidable by Corollary 3.4 in \citep{MHC}.
\end{proof}
\subsection{Discussion} \label{Discc}
Note that in this construction, it is unclear if the PFA has fixed cardinality in state space and input space, and thus, it is unclear if the channel thus constructed has a fixed state space and input space. However, it is possible that results from \citep{Blondel} where undecidability results are proved for PFAs of fixed dimension can be modified in order to get results for undecidability of approximations, and if this is the case, it is possible that an analogue of Theorem \ref{TT} may also hold for channels which are constructed from PFAs as described in this appendix. These are just ideas at this point which might be correct or incorrect.

\end{document}